\tikzset{join/.code=\tikzset{after node path={%
\ifx\tikzchainprevious\pgfutil@empty\else(\tikzchainprevious)%
edge[every join]#1(\tikzchaincurrent)\fi}}}
\tikzset{>=stealth',every on chain/.append style={join},
         every join/.style={->}}
\tikzstyle{labeled}=[execute at begin node=$\scriptstyle,
\numberwithin{equation}{section}
\algnewcommand{\LineComment}[1]{\State \(\triangleright\) #1}
\newtheorem{fact}{Fact}
\newcommand{\Between}{\mathsf{Btw}}
\newcommand{\broker}{\mathsf{BROKER}}
\newcommand{\Center}{\mathsf{Center}}
\newcommand{\CenterPeri}{\mathsf{CP}}
\newcommand{\Collab}{\mathsf{Col}}
\newcommand{\diam}{\mathsf{diam}}
\newcommand{\dist}{\mathsf{dist}}
\newcommand{\dominate}{\mathsf{DOM}}
\newcommand{\ecc}{\mathsf{ecc}}
\newcommand{\Enron}{\mathsf{Enron}}
\newcommand{\Facebook}{\mathsf{Facebook}}
\newcommand{\Max}{\mathsf{Max}}
\newcommand{\ML}{\mathsf{ML}}
\newcommand{\rad}{\mathsf{rad}}
\newcommand{\Improved}{\mathsf{Imp}}
\renewcommand{\restriction}{\mathord{\upharpoonright}}
\newcommand{\Simp}{\mathsf{S}}
\newcommand{\Periphery}{\mathsf{Periphery}}
\newcommand{\problemradius}{\mathsf{BROKER}}
\newcommand{\problemdiameter}{\mathsf{DIAM}}
\title{How to Build Your Network? A Structural Analysis}
\author{Anastasia Moskvina\inst{1} and  Jiamou Liu\inst{2}}
\institute{\ \inst{1} Auckland University of Technology, New Zealand\\
\texttt{anastasia.moskvina@aut.ac.nz}\\
\ \inst{2} The University of Auckland, New Zealand\\
\texttt{jiamou.liu@auckland.ac.nz}}
\begin{document}

\maketitle

\begin{abstract}
Creating new ties in a social network facilitates knowledge exchange and affects positional advantage.
In this paper, we study the process, which we call {\em network building}, of establishing ties between two existing social networks in order to reach certain structural goals. We focus on the case when one of the two networks consists only of a single member and motivate this case from two perspectives. The first perspective is {\em socialization}: we ask how a newcomer can forge relationships with an existing network to place herself at the center. We prove that obtaining optimal solutions to this problem is NP-complete, and present several efficient algorithms to solve this problem and compare them with each other. The second~perspective is {\em network expansion}: we investigate how a network may preserve or reduce its diameter through linking with a new node, hence ensuring small distance between its members. 
For both perspectives the experiment demonstrates that a small number of new links is usually sufficient to reach the respective goal.
\end{abstract}
\section{Introduction}
The creation of interpersonal ties has been a fundamental question in the structural analysis of social networks. While strong ties emerge between individuals with similar social circles, forming a basis of trust and hence community structure, weak ties link two members who share few common contacts. The influential work of Granovetter reveals the vital roles of weak ties: It is weak ties that enable information transfer between communities and provide individuals positional advantage and hence influence and power \cite{Granovetter}.

\smallskip

Natural questions arise regarding the establishment of weak ties between communities: How to merge two departments in an organization into one? How does a company establish trade with an existing market? How to create a transport map from existing routes? We refer to such questions as {\em network building}. The basic setup  involves two networks; the goal is to establish ties between them to achieve certain desirable properties in the combined network. A real-life example of network building is the inter-marriages between members of  the Medici, the leading family of Renaissance Florence, and numerous other noble Florentine families, towards gaining power and control over the city \cite{Jackson-Medici}. Another example is by Paul Revere, a prominent Patriot during the American Revolution, who strategically created social ties to raise a militia \cite{UzziDunlap}. 

\smallskip


The examples of the Medici and Paul Revere pose a more restricted scenario of network building: Here one of the two networks involved is only a single node, and the goal is to establish this node in the other network. We motivate this setup from two directions:
\begin{enumerate}
\item This setup amounts to the problem of {\em socialization}: the situation when a newcomer joins a network as an organizational member. A natural question for the newcomer is the following: How should I forge new relationships in order to take an advantageous position in the organization? As indicated in \cite{Morrison},  socialization is greatly influenced by the social relations formed by the newcomer with ``insiders'' of the network.
\item This setup also amounts to the problem of {\em network expansion}. For example, an airline expands its existing route map with a new destination, while trying to ensure a small number of legs between any cities. 
\end{enumerate}

{\em Distance} refers to the length of a shortest path between two members in a network; this
is an important measure of the amount of influence one may exert to another in the network \cite{collaboration}. The {\em radius} of a network refers to the maximal distance from a central member to all others in a network. Hence when a newcomer joins an established network,  it is in the interest of the newcomer to keep her distance to others bounded by the radius. The {\em diameter} of a network refers to the longest distance between any two members. It has long been argued from network science that small-world property -- the property that any two members of a network are linked by short paths -- improves network  robustness and facilitates information flow \cite{robustness}.
Hence it is in the interest of the  network to keep the diameter small as the network expands.
Furthermore, each relation requires time and effort to establish and maintain; thus one is interested in minimizing the number of new ties while building a network.

\paragraph*{\bf Contribution.} The novelty of this work is in proposing a formal, algorithmic study of organizational socialization. More specifically we investigate the following {\em network building problems}: Given a network $G$, add a new node $u$ to $G$ and create as few ties as possible for $u$ such that:
\begin{enumerate}
\item[(1)] $u$ is in the center of the resulting network; or
\item[(2)] the diameter of the resulting network is not larger than a specific value.
\end{enumerate}
Intuitively,  (1) asks how a newcomer $u$ may optimally connect herself with members of $G$, so that she belongs to the center. We prove that this problem is in fact NP-complete (Theorem~\ref{thm:problemradius}). Nevertheless, we give several efficient algorithms for this problem; in particular, we demonstrate a ``simplification'' process that significantly improves performance.
Intuitively, (2) asks how a network may preserve or reduce its diameter by connecting with a new member $u$. We show that ``preserving the diameter'' is trivial for most real-life networks and give two algorithms for ``reducing the diameter''.
We experimentally test and compare the performance of all our algorithms. Quite surprisingly, the experiments demonstrate that a very small number of new edges is usually sufficient for each problem even when the graph becomes large. 

\paragraph*{\bf Related works.} This work is predated by organizational behavioral studies \cite{socialization1,socialization2,Morrison}, which look at how social ties affect a newcomer's integration and assimilation to the organization. The authors in \cite{CrossThomas,UzziDunlap} argue {\em brokers} -- those who bridge and connect to diverse groups of individuals -- enable good network building; creating ties with and even becoming a broker oneself allows a person to gain private information, wide skill set and hence power. Network building theory has also been applied to various other contexts such as economics (strategic alliance of companies) \cite{Stuart1}, governance (forming inter-government contracts) \cite{federalism},  and politics (individuals' joining of political movements) \cite{Passy}. Compared to these works, the novelty here is in proposing a formal framework of network building, which employs techniques from complexity theory and algorithmics.

This work is also related to two forms of network formation: {\em dynamic models} and {\em agent-based models}, both aim to capture the natural emergence of social structures \cite{Jackson-Medici}. The former originates from random graphs, viewing the emergence of ties as a stochastic process which may or may not lead to an optimal structure \cite{entangle}. The latter comes from economics, treating a network as a multiagent system where utility-maximizing nodes establish ties in a competitive setting \cite{strategicNF,JacksonSurvey}. Our work differs from network formation as the focus here is on calculated strategies that achieve desirable goals in the combined network.

\section{Networks Building: The Problem Setup}

We view a {\em network} as an undirected unweighted connected graph  $G = (V, E)$ where $V$ is a set of nodes and $E$ is a set of (undirected) edges on $V$. We denote an edge $\{u,v\}$ as $uv$. If $uv\in E$ then $v$ is said to be {\em adjacent} to $u$.
A {\em path} (of {\em length} $k$) is a sequence of nodes $u_0,u_1,\ldots,u_k$ where $u_iu_{i+1}\!\in\!E$ for any $0\!\leq\!i\!<\!k$. The {\em distance} between $u$ and $v$, denoted by $\dist(u,v)$, is the length of a shortest path from $u$ to $v$. The {\em eccentricity} of $u$ is the maximum distance from $u$ to any other node, i.e., $\ecc(u)=\max_{v\in V} \dist(u,v)$. The {\em diameter} of the network $G$ is $ \diam(G)=\max_{u\in V} \ecc(u)$. The {\em radius} $\rad(G)$ of $G$ is $\min_{u\in V} \ecc(u)$. The {\em center} of $G$ consists of those nodes that are closest to all other nodes; it is the set $C(G)\coloneqq \{u\in V\mid \ecc(u)=\rad(G)\}$.
\begin{definition}
Let $G=(V,E)$ be a network and $u$ be a node not in $V$.  For $S\subseteq V$, denote by $E_S$ the set of edges $\{uv\mid v\in S\}$. Define $G\oplus_S u$ as the graph $(V\cup \{u\}, E\cup E_S)$.
\end{definition}
We require that $S\!\neq\!\varnothing$ and thus $G\oplus_S u$ is a network built by incorporating $u$ into $G$. By \cite{UzziDunlap}, for a newcomer $u$ to establish herself in $G$ it is essential to identify {\em information brokers} who connect to diverse parts of the network.
Following this intuition, we make the following definition
 \begin{definition}
 A set $S\subseteq V$ is a {\em broker set} of $G$ if $\ecc(u)=\rad(G\oplus_S u)$; namely, linking with $S$ enables $u$ to get in the center of the network.
 \end{definition}
Formally, given a network $G=(V,E)$, the problem of {\em network building for $u$} means selecting a set $S\!\subseteq\!V$ so that the combined network $G\!\oplus_S\!u$ satisfies certain conditions. Moreover, the desired set $S$ should contain as few nodes as possible. We focus on the following two key problems: 
\begin{enumerate}
\item $\problemradius$: The set $S$ is a broker set.
\item $\problemdiameter_\Delta$:  The diameter $\diam\!(\!G\!\oplus_S\! u\!)\!\leq\!\Delta$ for a given $\Delta\leq \diam(G)$.
\end{enumerate}
Note that for any network $G$, if $u$ is adjacent to all nodes  in $G$, it will have eccentricity 1, i.e., in the network $G\oplus_V u$, $\ecc(u)\!=\!1\!=\!\rad(G\oplus_V u)$ and $\diam(G\oplus_V u)\!=\!2$. Hence a desired $S$ must exist for $\problemradius$ and $\problemdiameter_\Delta$ where $\Delta\geq 2$. In subsequent section we systematically investigate these two problems.

\section{How to Be in the Center? Complexity and Algorithms for $\problemradius$}

\subsection{Complexity}
We  investigate the computational complexity of the decision problem $\broker(G,k)$, which is defined as follows:
\begin{description}
\item[INPUT] A network $G=(V,E)$, and an integer $k\geq 1$
\item[OUTPUT] Does $G$ have a broker set of size $k$?
\end{description}

The $\broker(G,k)$ problem is trivial if  $G$ has radius 1, as then $V$ is the only broker set. When $\rad(G)>1$, we recall the following notion: A set of nodes $S\subseteq V$ is a  {\em dominating set} if every node not in $S$ is adjacent to at least one member of $S$. The \emph{domination number} $\gamma(G)$ is the size of a smallest dominating set for $G$. The $\dominate(G,k)$ problem concerns testing whether  $\gamma(G)\!\leq\!k$ for a given graph $G$ and input $k$; it is a classical NP-complete decision problem \cite{GareyJohnson}.


\begin{theorem}\label{thm:problemradius}
The $\broker(G,k)$ problem is NP-complete.
\end{theorem}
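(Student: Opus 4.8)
The plan is to show membership in NP and then NP-hardness by reduction from $\dominate$. For membership, given a candidate set $S$, we can verify in polynomial time that $|S|\le k$ and that $\ecc(u)=\rad(G\oplus_S u)$ by computing all pairwise distances in $G\oplus_S u$ (e.g., BFS from each node). For hardness, I would reduce $\dominate(G,k)$ to $\broker(G',k')$ for a suitably constructed $G'$. The key observation linking the two problems is the following: if $\rad(G)>1$, then in $G\oplus_S u$ the node $u$ has eccentricity $\le 2$ precisely when $S$ is a dominating set of $G$ (every node of $G$ is either in $S$, hence at distance $1$ from $u$, or adjacent to a member of $S$, hence at distance $2$). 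So $S$ being a broker set of a graph whose radius is exactly $2$ is essentially the same as $S$ being a dominating set. The difficulty is that an arbitrary instance of $\dominate$ need not have radius $2$, and even if it does, one must ensure that no \emph{smaller} set $S'$ can make $u$'s eccentricity equal to the (possibly larger) radius of the \emph{whole} combined graph — i.e., one must control $\rad(G\oplus_S u)$ as well as $\ecc(u)$.

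Concretely, the construction I would try: given a $\dominate$ instance $(G,k)$ with $G=(V,E)$, build $G'$ by taking $G$ and attaching gadgets that force the radius of $G'$, and of $G'\oplus_S u$ for every nonempty $S$, to be controlled — for instance, add a small number of extra "pendant" or "padding" vertices and extra structure so that (a) $\rad(G'\oplus_S u)=2$ whenever $u$ is adjacent to everything relevant, and more importantly (b) for any $S\subseteq V(G')$, $\ecc(u)=\rad(G'\oplus_S u)$ holds iff $\ecc_{G'\oplus_S u}(u)\le 2$ iff $S\cap V$ dominates $G$. A clean way to pin the radius is to add two new vertices $a,b$ joined to every vertex of $G$ (or to each other and to all of $V$), so that some existing vertex already has eccentricity $2$ in $G'$; then after adding $u$ with any neighborhood, $\rad(G'\oplus_S u)$ is forced to be $2$, and the only freedom left is whether $u$ itself attains eccentricity $2$, which happens exactly when $u$'s neighbors dominate the graph. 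One then sets $k'=k$ (plus a constant absorbing any mandatory gadget vertices $u$ must connect to) and argues the equivalence in both directions.

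The main obstacle I anticipate is exactly step (b): ruling out "cheating" broker sets that exploit the gadget vertices rather than dominating $G$, and simultaneously ensuring the gadget does not itself blow up or collapse the radius in an unintended way (e.g., the padding must not create a universal vertex in $G'$, which would make $\rad(G')=1$ and trivialize $\broker$). I would handle this by making the gadget vertices mutually "far" enough from some witness vertex, or by using several copies, so that the radius of $G'$ is provably $2$, no vertex of $G'$ is universal, and yet $u$ connecting to a dominating set of $G$ suffices (and is necessary) to reach eccentricity $2$. Once the gadget's distance structure is nailed down, the correspondence $S$ is a minimum broker set of $G'$ $\iff$ $S\setminus(\text{gadget part})$ is a minimum dominating set of $G$ follows, giving the polynomial-time many-one reduction and hence NP-hardness; combined with membership in NP, this proves $\broker(G,k)$ is NP-complete.
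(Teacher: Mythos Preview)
Your high-level strategy---membership in NP plus a polynomial reduction from $\dominate$---is exactly what the paper does. The genuine gap is in the concrete gadget you propose. If you add two vertices $a,b$, each adjacent to every vertex of $V$ but not to one another, then indeed $\rad(G')=2$ (assuming $\rad(G)\ge 2$). However, the set $S=\{a,b\}$ is then \emph{always} a broker set of $G'$: in $G'\oplus_{\{a,b\}}u$ the new node $u$ is at distance $1$ from $a,b$ and at distance $2$ from every $v\in V$ (via either $a$ or $b$), so $\ecc(u)=2=\rad(G'\oplus_{\{a,b\}}u)$. Thus $G'$ admits a broker set of size $2$ regardless of $\gamma(G)$, and the reduction collapses for every instance with $\gamma(G)>2$. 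You correctly flagged this ``cheating via gadget vertices'' as the main obstacle, but the proposal does not actually overcome it; the vague fixes (``make gadget vertices mutually far'', ``use several copies'') would need to be made precise, and doing so is the entire content of the hardness argument.

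The paper's construction sidesteps this by pinning the radius at $3$ rather than $2$. It builds $H$ on three layers $\{v_1,v_2,v_3:v\in V\}$: the $v_1$'s form a clique, the $v_2$'s carry a copy of $G$, each $v_3$ is a pendant attached to $v_2$, and $v_1v_2,\,v_2v_3$ are edges. Then $\rad(H)=3$, witnessed by the layer-$1$ clique, while any two pendants $v_3,w_3$ with $vw\notin E$ are at distance at least $4$. The clique layer plays the role of your $a,b$, but now connecting $u$ only to clique vertices leaves every pendant $v_3$ at distance $\ge 3$ from $u$ \emph{and} leaves some pendant at distance $4$, so $u$ is not central. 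The pendant layer is what blocks the cheat: to reach all $v_3$'s within distance $3$, $u$'s neighbours must project to a dominating set in the copy of $G$. This three-layer trick is the missing idea in your proposal.
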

\begin{proof} The $\broker(G,k)$ problem is clearly in NP. Therefore we only show NP-hardness.
We present a reduction from $\dominate(G,k)$ to $\broker(G,k)$. Note that when $\rad(G)\!=\!1$, $\gamma(G)\!=\!1$.  Hence $\dominate(G,k)$ remains NP-complete if we assume $\rad(G)>1$. Given a graph $G=(V,E)$ where $\rad(G)>1$, we construct a graph $H$. The set of nodes in $H$ is $\{v_i\mid v\in V, 1\leq i\leq 3\}$. The edges of $H$ are  as follows:
\begin{itemize}
\item Add an edge $v_i v_{i+1}$ for every $v\in V$, $1\leq i<3$
\item Add an edge $v_1w_1$ for every $v,w\in V$
\item Add an edge $v_2 w_2$ for every edge $vw\in E$
\end{itemize}
Namely, for each node $v\in V$ we create three nodes $v_1,v_2,v_3$ which form a path. We link the nodes in $\{v_1\mid v\in V\}$ to form a complete graph, and nodes in $\{v_2\mid v\in V\}$ to form a copy of $G$.  Since $\rad(G)\geq 2$, for each node $v\in V$ there is $w\in V$ with $\dist(v,w)\geq 2$. Hence in $H$,  $\dist(v_3,w_3)\geq 4$, and $\dist(v_2,w_3)\geq 3$. As the longest distance from any $v_1$ to any other node is $3$, we have $\rad(H)=3$.

Suppose $S$ is a dominating set of $G$. If we add all edges $uv$ where $v\in D=\{v_2\mid v\in S\}$, $\ecc(u)=3=\rad(H\oplus_D u)$. Hence $D$ is a broker set for $H$. Thus the size of a minimal broker set of $H$ is at most the size of a minimal dominating set of $G$.
Conversely, for any set $D$ of nodes in $H$, define the {\em projection} $p(D) = \{v\mid v_i\in D \text{ for some } 1\leq i\leq 3\}$.
 Suppose $p(D)$ is not a dominating set of $G$. Then there is some $v\in V$ such that for all $w\in p(D)$, $\dist(v_2,w_2)\geq 2$. Thus if we add all edges in $\{ux\mid x\in D\}$, $\dist(u,v_3)\geq 4$. But then $\ecc(w_1)=3$ for any $w\in p(D)$. So $D$ is not a broker set. This shows that the size of a minimal dominating set of $G$ is at most the size of a minimal broker set.

The above argument implies that the size of a minimal broker set for $H$ coincides with the size of a minimal dominating set for $G$. This finishes the reduction and hence the proof. \qed
\end{proof}

\subsection{Efficient Algorithms}\label{subsec:radius_algorithms}
Theorem~\ref{thm:problemradius} implies that computing optimal solution of $\problemradius$ is computationally hard. Nevertheless, we next present a number of efficient algorithms that take as input a network $G=(V,E)$ with radius $r$ and output a small broker set $S$ for $G$. A set $S\subseteq V$ is called {\em sub-radius dominating} if for all $v\in V$ not in $S$, there exists some $w\in S$ with $\dist(v,w)<r$. Our algorithms are based on the following fact, which is clear from definition: 

\begin{fact}\label{fact:sub-radius dominating}Any sub-radius dominating set is also a broker set.
\end{fact}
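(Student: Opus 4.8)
Write $H:=G\oplus_S u$ and $r:=\rad(G)$, and recall that $\rad(H)\le\ecc(u)$ holds in every graph. To prove that $S$ is a broker set, i.e.\ that $\ecc(u)=\rad(H)$, it therefore suffices to establish the two bounds $\ecc_H(u)\le r$ and $\rad(H)\ge r$: chaining them gives $\rad(H)\le\ecc_H(u)\le r\le\rad(H)$, so all three quantities coincide and $u\in C(H)$. The first bound is the ``clear from definition'' half and carries the intuition that linking with $S$ brings $u$ within the old radius; the second bound is where the actual work lies.

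For $\ecc_H(u)\le r$, I bound $\dist_H(u,y)$ for an arbitrary node $y$ of $H$. If $y=w\in S$, then $\dist_H(u,w)=1\le r$. If $y=v\in V\setminus S$, then sub-radius domination supplies some $w\in S$ with $\dist_G(v,w)\le r-1$, and prefixing the edge $uw$ to a shortest $G$-path from $w$ to $v$ yields a walk of length at most $1+(r-1)=r$ from $u$ to $v$ in $H$; hence $\dist_H(u,v)\le r$. Taking the maximum over $y$ gives $\ecc_H(u)\le r$.

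For $\rad(H)\ge r$ -- the step I expect to be the main obstacle -- I would show that incorporating $u$ cannot drop the eccentricity of any old node below $r$. Fix $x\in V$. Since the only edges of $H$ incident to $u$ are those of $E_S$, every path in $H$ between two old nodes that avoids $u$ is already a $G$-path, so the only genuinely new shortcuts are those routed through $u$, and a shortest path from $x$ to $y$ that uses $u$ has length exactly $\dist_H(x,u)+\dist_H(u,y)$. Choosing (via $\ecc_G(x)\ge r$) a node $y\in V$ with $\dist_G(x,y)\ge r$, the plan is to verify that neither a $u$-avoiding path (which retains length $\dist_G(x,y)\ge r$) nor a $u$-routed detour can make $\dist_H(x,y)<r$, so that $\ecc_H(x)\ge\dist_H(x,y)\ge r$. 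The delicate point -- and, I suspect, the crux of the whole statement -- is that a node that is $G$-far from $x$ may nevertheless lie very close to $S$, making the detour from $x$ through $u$ to that node cheap; ruling this out seems to require a case analysis on how far $x$ itself is from $S$ and on where the far node sits relative to $S$, in order to argue that the edges of $E_S$ cannot collectively pull $x$ to within distance $r-1$ of every other node. Once $\rad(H)\ge r$ is secured, the reduction of the first paragraph closes the argument and establishes Fact~\ref{fact:sub-radius dominating}.
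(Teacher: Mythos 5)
Your first paragraph correctly isolates the easy half and proves it: since every $v\notin S$ has some $w\in S$ with $\dist(v,w)\le r-1$, prefixing the edge $uw$ gives $\dist_H(u,v)\le r$, hence $\ecc_H(u)\le r$; this is essentially all the paper itself offers, since it asserts the Fact to be ``clear from definition'' with no further argument. But your proof is not complete: the second bound $\rad(H)\ge r$, which you rightly call the crux, is only announced as a plan (``the plan is to verify\dots'', ``seems to require a case analysis\dots'') and is never carried out. As written, that is a genuine gap.

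Moreover, the gap cannot be closed, because $\rad(H)\ge r$ is false in general, and so---read literally with the paper's definition of a broker set, $\ecc(u)=\rad(G\oplus_S u)$---is the Fact itself. Take $G$ to be the path $v_0v_1\cdots v_8$, so $r=\rad(G)=4$, and let $S=\{v_3,v_7\}$. Every vertex outside $S$ is within distance $3<r$ of $S$ (check $v_0$: distance $3$ to $v_3$; all others are even closer), so $S$ is sub-radius dominating. In $H=G\oplus_S u$ we have $\ecc_H(u)=4$, witnessed by $v_0$, which is at distance $3$ from $v_3$ and $7$ from $v_7$. However, the shortcut $v_3\,u\,v_7$ pulls every vertex within distance $3$ of $v_3$: $\dist_H(v_3,v_7)=2$, $\dist_H(v_3,v_8)=3$, $\dist_H(v_3,v_6)=3$, and all remaining distances from $v_3$ are at most $3$ already in $G$. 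Hence $\rad(H)\le\ecc_H(v_3)=3<4=\ecc_H(u)$, so $u$ is not in the center of $H$ and $S$ is not a broker set. The failure mode is exactly the ``delicate point'' you flagged: the edges $E_S$ can lower the eccentricity of an old vertex (here a vertex of $S$ itself) strictly below $\ecc_H(u)$. The statement survives only in the weaker reading ``$\ecc_H(u)\le\rad(G)$''---which your first paragraph already establishes---or under extra hypotheses guaranteeing that inserting $u$ does not drop $\rad(G\oplus_S u)$ below $\ecc_H(u)$; the paper's ``clear from definition'' glosses over precisely this point, and your attempt has the merit of exposing it, but it does not prove the Fact.
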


\subsubsection{(a) Three greedy algorithms} 

We first present three greedy algorithms; each algorithm applies a heuristic that iteratively adds new nodes to the broker set $S$.
The starting configuration is $S= \varnothing$ and $U= V$.  During its computation, the algorithm maintains a subgraph $F=(U,E\restriction U)$, which is induced by the set $U$ of all ``uncovered'' nodes, i.e., nodes that have distance $>(r-1)$ from any current nodes in $S$. It repeatedly performs the following operations until $U=\varnothing$, at which point it outputs $S$:
\begin{enumerate}
\item Select a node $v\in U$ based on the corresponding heuristic and add $v$ to $S$. 
\item Compute all nodes at distance at most  $(r-1)$ from $v$. Remove these nodes and all attached edges from $F$. 
\end{enumerate}
\paragraph*{\bf Algorithm 1: $\Max$ (Max-Degree).} The first heuristic is based on the intuition that one should connect to the person with the highest number of social ties; at each iteration, it adds to $S$ a node with maximum degree in the graph $F$.

\paragraph*{\bf Algorithm 2: $\Between$ (Betweenness).} The second heuristic is based on {\em betweenness}, an important centrality measure in networks \cite{betweeness}. More precisely, the {\em betweenness} of a node $v$ is the number of shortest paths from all nodes to all others that pass through $v$. Hence high betweenness of $v$ implies, in some sense, that $v$ is more likely to have short distance with others. This heuristic works in the same manner as $\Max$ but picks nodes with maximum betweenness in $F$.

\paragraph*{\bf Algorithm 3: $\ML$ (Min-Leaf).}  The third heuristic is based on the following intuition: A node is called a {\em leaf} if it has minimum degree in the network; leaves correspond to  least connected members in the network, and may become outliers once nodes with higher degrees are removed from the network. Hence this heuristic gives first priority to leaves. Namely, at each iteration, the heuristic adds to $S$ a node that has distance at most $r-1$ from $v$. More precisely, the heuristic first picks a leaf $v$ in $F$, then applies a sub-procedure to find the next node $w$ to be added to $S$. The sub-procedure  determines a path $v=u_1,u_2,\ldots$ in $F$ iteratively as follows:
    \begin{enumerate}
    \item Suppose $u_i$ is picked. If $i=r$ or $u_i$ has no adjacent node in $F$, set $u_i$ as $w$ and terminate the process.
    \item Otherwise select a $u_{i+1}$ (which is different from $u_{i-1}$) among adjacent nodes of $u_i$ with maximum degree.
    \end{enumerate}
    After the process above terminates, the algorithm adds $w$ to $S$. Note that the distance between $w$ and $v$ is at most  $r-1$.

We mention that Algorithms 1,3 have been applied in \cite{k-domination} to {\em regular graphs}, i.e., graphs where all nodes have the same degree. In particular, $\ML$ has been shown to produce small $k$-dominating sets for given $k$ in the average case for regular graphs. 


\subsubsection{(b) Simplified greedy algorithms}

One significant shortcoming of Algorithms 1--3 is that, by deleting nodes from the network $G$, the network may become disconnected, and nodes that could have been connected via short paths are no longer reachable from each other. This process may produce {\em isolated} nodes in $F$, i.e., nodes having degree 0, which are subsequently all added to the output set $S$. Moreover, maintaining the graph $F$ at each iteration also makes  implementations more complex. Therefore we next propose {\em simplified} versions of  Algorithms  1--3. 

\paragraph*{\bf Algorithms 4 $\Simp$-$\Max$, 5 $\Simp$-$\Between$, 6 $\Simp$-$\ML$.} The simplified algorithms act in a similar way as their ``non-simplified'' counterparts; the difference is that here the heuristic works over the original network $G$ as opposed to the updated network $F$. Hence the graph $F$ is no longer computed. Instead we only need to maintain a set $U$ of ``uncovered'' nodes. The simplified algorithms have the following general structure: \ Start from $S= \varnothing$ and $U= V$, and repeatedly perform the following until $U=\varnothing$, at which point output $S$:
\begin{enumerate}
\item Select a node $v$ from $U$ based on the corresponding heuristic and add $v$ to $S$.
\item Compute all nodes with distance $<\rad(G)$ from $v$, and remove any of these node  from $U$.
\end{enumerate}
We stress that here the same heuristics as described above in Algorithms 1--3 are applied, except that we replace any mention of ``$F$'' in the description with ``$U$'', while all notions of degrees, distances, and betweenness are calculated based on the original network $G$.

As an example, in Fig.~\ref{fig:Max} we run $\Max$ and $\Simp$-$\Max$ on the same network $G$, which contains 30 nodes. The figures show the result of both algorithms, and in particular, how $\Simp$-$\Max$ outputs a smaller sub-radius dominating set. We further verify via experiments below that the simplified algorithms lead to much smaller output $S$ in almost all cases.

\begin{figure}[!] \centering
        \includegraphics[width=0.5\textwidth]{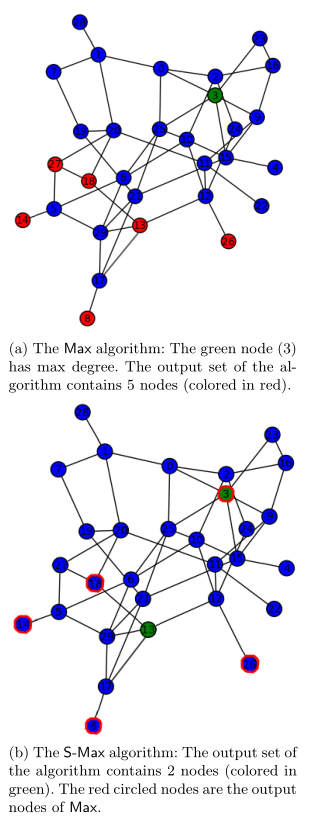}        
        \caption{ The network $G$ contains 30 nodes and has radius $\rad(G)=4$. The $\Max$ algorithm: The algorithm first puts node 3 (shown in green) into $S$. Then removes all nodes (and attached edges that are at distance three from the node 3; these nodes are considered ``covered'' by 3. In the remaining graph, there are three isolated nodes 8,14,26, as well as a line of length 2. The algorithm then puts the node 18 into $S$ which ``covers'' 27 and 13. Thus the output set is $S=\{3,18,8,14,26\}$. The $\Simp$-$\Max$ algorithm: The algorithm first puts 3 into the set $S$, but does not remove the covered nodes. It simply construct a set containing all ``uncovered'' nodes, namely, $\{27,18,13,14,8,26\}$. The algorithm then selects the node 13 which has max degree from these nodes, and puts into $S$. It then turns out that all nodes are covered. Therefore the output set is $S=\{3,13\}$. Thus $\Simp$-$\Max$ is superior in this example.}\label{fig:Max}
\end{figure}

\subsubsection{(c) Center-based algorithms}

The 6 algorithms presented above can all be applied to find $k$-dominating set for arbitrary $k\geq 1$. Since our focus is in finding sub-radius dominating set to answer the $\problemradius$ problem, we describe two algorithms that are specifically designed for this task. When building network for a newcomer, it is natural to consider nodes that are already in the center of the network $G$. Hence our two algorithms are based on utilizing the center of $G$.

\paragraph*{\bf Algorithm 7 $\Center$.} The algorithm finds a center $v$ in $G$ with minimum degree, then output all nodes that are adjacent to $v$. Since $v$ belongs to the center, for all $w\in V$, we have $\dist(v,w)\leq \rad(G)$ and thus there is $v'$ adjacent to $v$ such that  $\dist(w,v')=\dist(w,v)-1<\rad(G)$. Hence the algorithm returns a sub-radius dominating set. Despite its apparent simplicity, $\Center$ returns surprisingly good results in many cases, as shown in the experiments below.

\paragraph*{\bf Algorithm 8 $\Improved$-$\Center$.} We present a modified version of $\Center$, which we call $\Improved$-$\Center$.
The algorithm first picks a center with minimum degree, and then orders all its neighbors in decreasing degree. It adds the first neighbor to $S$ and remove all nodes $\leq (r-1)$-steps from it. This may disconnect the graph into a few connected components. Take the largest component $C$. If $C$ has a smaller radius  than $r$, we add the center of this component to $S$; otherwise we add the next neighbor to $S$. We then remove from $F$ all nodes at distance $\leq (r-1)$ from the newly added node. This procedure is repeated until $F$ is empty. See Procedure~\ref{alg:Center_improved}. Fig.~\ref{fig:Center} shows an example where $\Improved$-$\Center$ out-performs $\Center$.

\begin{algorithm}[!htb]
\floatname{algorithm}{Procedure}
 \caption{ $\Improved$-$\Center$: Given $G=(V,E)$ (with radius $r$)}
 \label{alg:Center_improved}
  \begin{algorithmic}
    \State Pick a center node $v$ in $G$ with minimum degree $d$
    \State Sort all adjacent nodes of $v$ to a list $u_1,u_2,\ldots,u_d$ in decreasing order of degrees
    \State Set $S\leftarrow \varnothing$ and $i\leftarrow 1$
    \While{$U\neq \varnothing$}
        \State Set $C$ as the largest connected component in $F$

        \If{$\rad(C)<\rad(G)-1$}
            \State Pick  a center node $w$ of $C$. Set $S\leftarrow S\cup \{w\}$
            \State Set $U\leftarrow U\setminus \{w'\in U\mid \dist(w,w')<r\}$
        \Else
            \State Set $S\leftarrow S\cup \{u_i\}$
            \State Set $U\leftarrow U\setminus \{w'\in U\mid \dist(u_i,w')<r\}$
            \State Set $i\leftarrow i+1$
        \EndIf
        \State Set $F$ as the subgraph induced by the current $U$
    \EndWhile
    \State\Return $S$
  \end{algorithmic}
\end{algorithm}
\begin{figure}[!htb]
        \centering
                \includegraphics[width=0.7\textwidth]{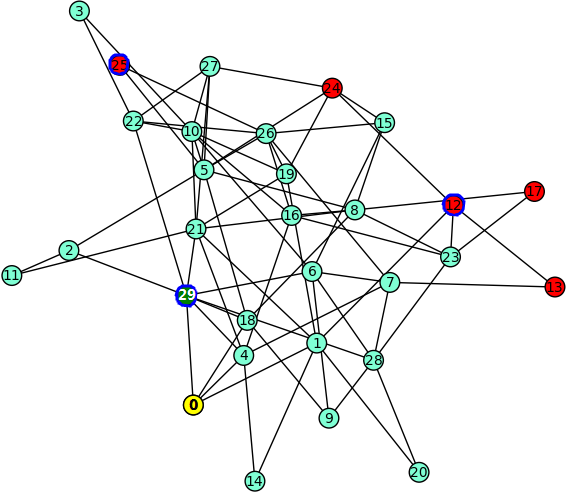}
        \caption{ The graph $G$ has radius $\rad(G)=3$. The yellow node 0 is a center with min degree 4. Thus $\Center$ outputs 4 nodes $\{1,4,18,29\}$. The dark green node 29 adjacent to 0 has max degree; the red nodes are ``uncovered'' by 29. Thus $\Improved$-$\Center$ outputs the 3 blue circled nodes $\{12,25,29\}$. }\label{fig:Center} 
\end{figure}
Finally, we note that all of Algorithms 1--8 output a sub-radius dominating set $S$ for the network $G$. Thus the following theorem is a direct implication from Fact~\ref{fact:sub-radius dominating}.
\begin{theorem}
All of Algorithms 1--8 output a brocker set for the network $G$.
\end{theorem}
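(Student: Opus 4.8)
The plan is to reduce everything to Fact~\ref{fact:sub-radius dominating}: it suffices to show that each of Algorithms 1--8 outputs a \emph{sub-radius dominating} set $S$, i.e.\ a set such that every $v\in V\setminus S$ has some $w\in S$ with $\dist(v,w)<r$, where $r=\rad(G)$. (We may assume $r\geq 2$; the case $r=1$ was already disposed of at the start of Section~3.1, where $V$ is the only relevant broker set.) The common scheme for all eight algorithms is the same: each maintains a set $U$ of ``uncovered'' nodes, each terminates \emph{exactly} when $U=\varnothing$, and each preserves the invariant that a node is removed from $U$ only when it is added to $S$ or when it becomes \emph{covered}, i.e.\ lies within distance $<r$ in $G$ of some node already in $S$. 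Once these two facts are established for a given algorithm, the conclusion is immediate: at termination every node is either in $S$ or covered, which is precisely the sub-radius dominating condition, and Fact~\ref{fact:sub-radius dominating} then yields a broker set.

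For Algorithms 1--3 I would argue as follows. An iteration picks $v\in U$, adds $v$ to $S$, and deletes from $F$ (hence from $U$) all nodes at distance at most $r-1$ \emph{in $F$} from $v$. Termination: each iteration removes at least $v$ itself from $U$, so the loop halts after at most $|V|$ rounds. Correctness of covering: if a node $x$ is deleted while processing $v$, then $\dist_G(x,v)\leq \dist_F(x,v)\leq r-1<r$, since $F$ is an induced subgraph of $G$ and distances can only grow under taking subgraphs; hence $x$ is covered by $v\in S$. Thus the invariant holds and $S$ is sub-radius dominating. For Algorithms 4--6 the same bookkeeping applies, but now distances are measured directly in $G$ and the removal rule deletes exactly the nodes at distance $<r$ from the chosen $v$; correctness is then immediate, and termination again follows because the selected $v\in U$ is itself removed each round. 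For Algorithm~7 ($\Center$) the required argument is already in the text: if $v$ is a center then $\dist(v,w)\leq r$ for every $w$, so the second vertex $v'$ on a shortest $v$--$w$ path is a neighbour of $v$ with $\dist(w,v')=\dist(w,v)-1\leq r-1<r$; hence the neighbourhood of $v$ that the algorithm returns is sub-radius dominating.

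The one case needing genuine care is Algorithm~8 ($\Improved$-$\Center$, Procedure~\ref{alg:Center_improved}), which is the main obstacle. Correctness of covering is not the issue: in both branches the algorithm adds to $S$ a node and then removes from $U$ all nodes within distance $<r$ (in $G$) of it, so the invariant above is maintained verbatim, and if the loop exits with $U=\varnothing$ then $S$ is sub-radius dominating. The delicate point is \emph{termination}: the ``else'' branch refers to $u_i$ for an index $i$ that only increases, so one must check $i$ never outruns the neighbour list $u_1,\dots,u_d$ of the center $v$. This is exactly where the Algorithm~7 observation is reused: the full set $\{u_1,\dots,u_d\}$ of neighbours of a center is already sub-radius dominating, so after at most $d$ executions of the ``else'' branch every node has been covered and $U=\varnothing$; meanwhile each execution of the ``if'' branch strictly shrinks $U$, since the component-center $w$ it selects lies in $U$ and is removed. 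Hence the number of iterations is bounded by $d+|V|$, the loop halts with $U=\varnothing$, and $S$ is sub-radius dominating. Having verified the sub-radius dominating property for all of Algorithms 1--8, Fact~\ref{fact:sub-radius dominating} finishes the proof.
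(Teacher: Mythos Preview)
Your proposal is correct and follows exactly the paper's approach: reduce to Fact~\ref{fact:sub-radius dominating} by arguing that each algorithm outputs a sub-radius dominating set. The paper's own proof is a one-line assertion of this fact, whereas you supply the verification the paper omits---in particular the termination argument for $\Improved$-$\Center$ (bounding the number of ``else'' branches by $d$ via the Algorithm~7 observation) is a genuine detail the paper leaves implicit.
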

\subsection{Experiments for $\problemradius$}
We implemented the algorithms using Sage \cite{Sage}. 
We apply two models of random graphs: The first (BA) is Barabasi-Albert's preferential attachment model which generates scale-free graphs whose degree distribution of nodes follows a power law; this is an essential property of numerous real-world networks \cite{BA}. The second (NWS) is Newman-Watts-Strogatz's small-world network \cite{NewmanWattsStrogatz}, which produces graphs with small average path lengths and high clustering coefficient. 

For each algorithm we are interested in two indicators of its performance: 1) {\em Output size}: The average size of the output broker set (for a specific class of random graphs). 2) {\em Optimality rate}: The probability that the algorithm gives optimal broker set for a random graph. To compute this we need to first compute the size of an optimal broker set (by brute force) and count the number of times the algorithm produces optimal solution for the generated graphs.

\paragraph*{\bf Experiment 1: Output sizes.} We generate $300$ graphs whose numbers of nodes vary between $100$ and $1000$ using each random graph model. We compute averaged output sizes of generated graphs by their number of nodes $n$ and radius $r$. The results are shown in Fig.~\ref{fig:improvedRes}. From the result we see: a) The simplified algorithms produce significantly smaller broker sets compared to  their unsimplified counterparts. This shows superiority of the simplified algorithms.
b) BA graphs in general allow smaller output set than NWS graphs. This may be due to the scale-free property which results in high skewness of the degree distribution.

\begin{figure}[!htb]
        \centering 
	   \includegraphics[width=\textwidth]{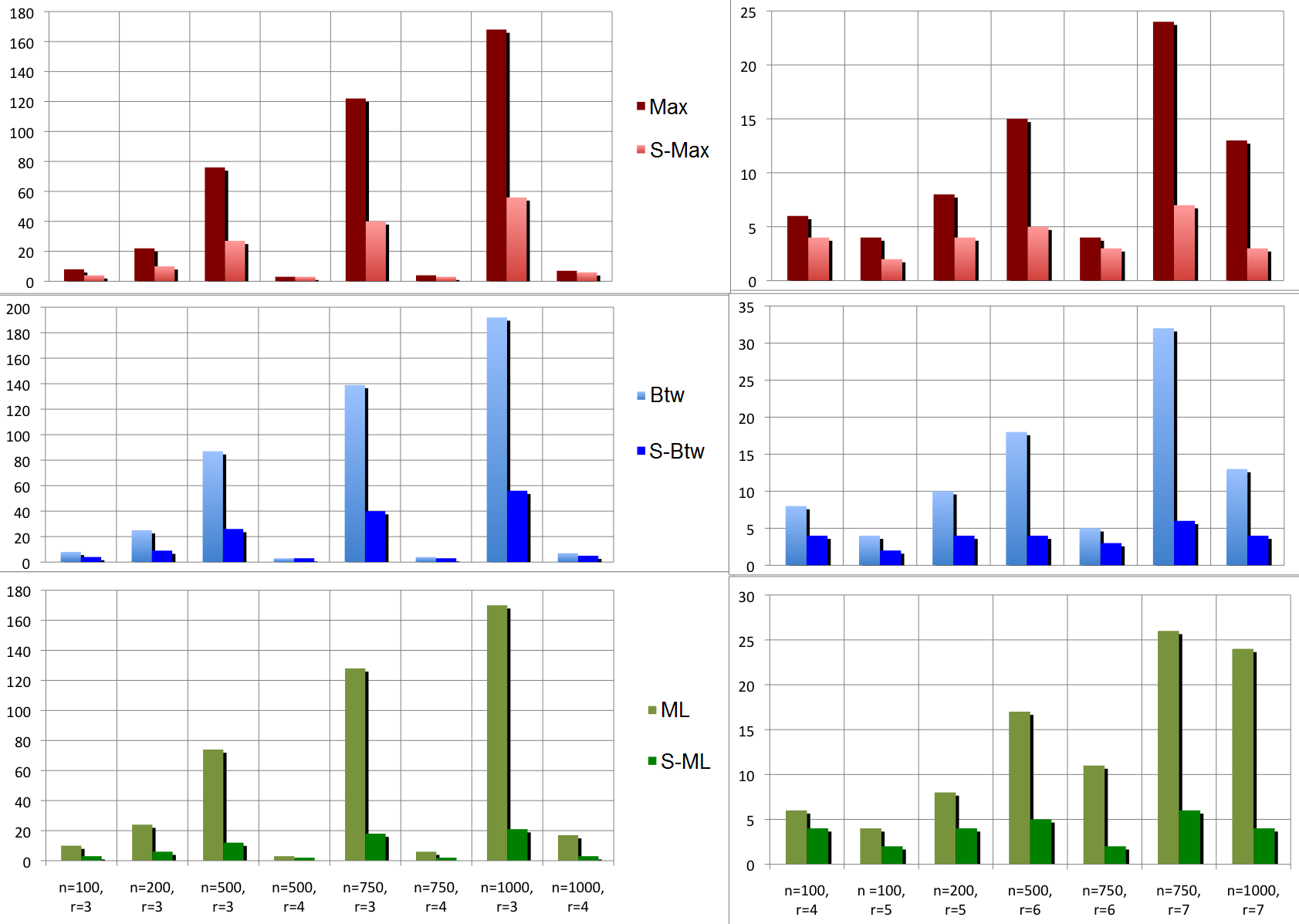}
        \caption{ Comparing results: average performance of the $\Max$, $\Between$, $\ML$, algorithms versus their simplified versions on randomly generated graphs (BA graphs on the left; NWS on the right)}\label{fig:improvedRes}
\end{figure}
\paragraph{\bf Experiment 2: Optimality rates.} For the second goal, we compute the optimality rates of algorithms when applied to random graphs, which are shown in Fig.~\ref{fig:res}. For BA graphs, the simplified algorithm $\Simp$-$\ML$ has significantly higher optimality rate ($\geq 85\%$) than other algorithms. On the contrary, its unsimplified counterpart $\ML$ has the worst optimality rate. This is somewhat contrary to Duckworth and Mans's work showing $\ML$ gives very small solution set for regular graphs \cite{k-domination}. 
For NWS graphs, several algorithms have almost equal optimality rate.  The three best algorithms are $\Simp$-$\Max$, $\Simp$-$\Between$ and $\Simp$-$\ML$ which has varying performance for graphs with different sizes (See Fig.~\ref{fig:optimality_nodes}).

    \begin{figure}[!htb]\centering

                \includegraphics[width=\textwidth]{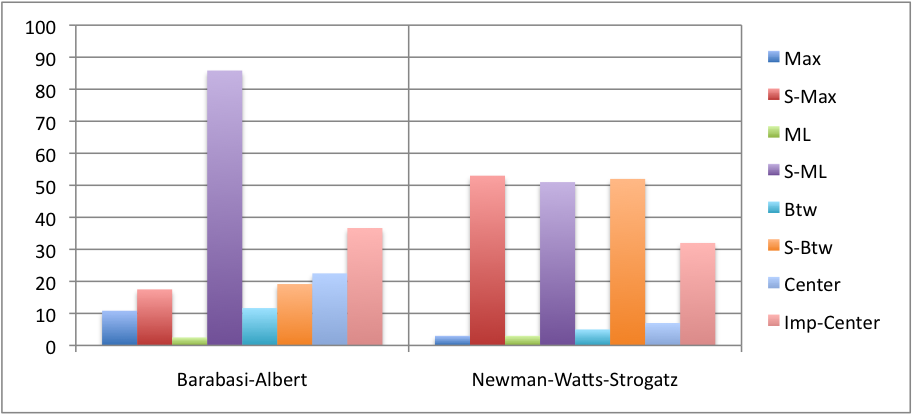}
        \caption{ Optimality rates for different types of random graphs}\label{fig:res} 
    \end{figure}

    \begin{figure}[!htb]\centering
                \includegraphics[width=\textwidth]{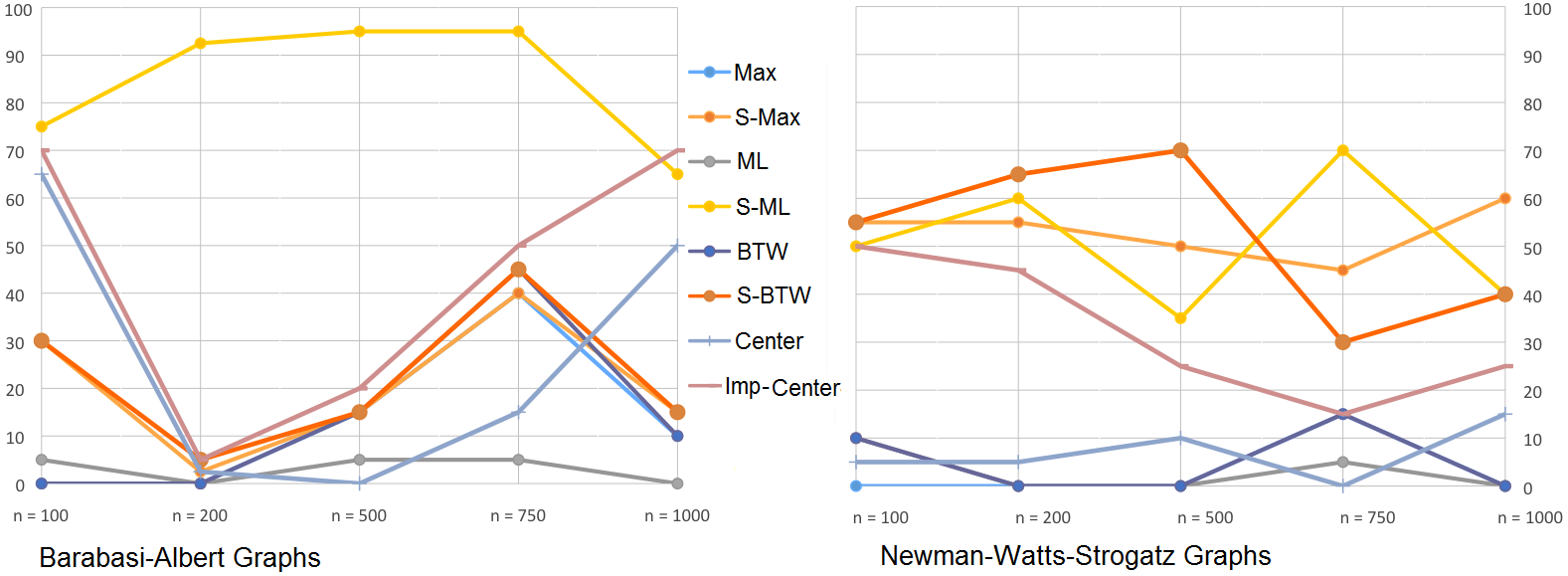}
        \caption{ Optimality rates when graphs are classified by sizes}\label{fig:optimality_nodes}
    \end{figure}
\paragraph*{\bf Experiment 3: Real-world datasets.}
We test the algorithms on several real-world datasets: The $\Facebook$ dataset, collected from survey participants of Facebook App, consists of friendship relation on Facebook \cite{facebook}. $\Enron$  is an email network  of the company made public by the FERC \cite{enron1}. Nodes of the network are email addresses and if an address $i$ sent at least one email to address $j$, the graph contains an undirected edge from $i$ to $j$.
$\Collab 1$ and $\Collab 2$ are collaboration networks that represent scientific collaborations between authors papers submitted to General Relativity and Quantum Cosmology category ($\Collab 1$), and to High Energy Physics Theory category ($\Collab 2$) \cite{collaboration}. 

\begin{table}[!htb]
\centering
\begin{tabular}{| l | l | l| l | l |}
\hline
&  Facebook &  Enron &  Col1 &  Col2\\
\hline
 Number of nodes &  4,039	 &  33,969 &  4,158&	 8,638	\\
\hline
 Number of edges &  88,234&  180,811 & 13,422& 24,806\\
\hline
Largest connected subgraph & 4,039 & 33,696 & 4,158 &	8,638 \\
\hline
 Diameter &  8 &  13 &  17 &  18\\
\hline
 Radius &  4 &  7 &  9 &  10\\
\hline
  \end{tabular}\caption{ Network properties} \label{table:datasets} 
\end{table}

Results on the datasets are shown in Fig.~\ref{fig:datasets_res}.  $\Between$ and $\Simp$-$\Between$ algorithms become too inefficient as it requires computing shortest paths between all pairs in each iteration. Moreover, $\Simp$-$\Max$ also did not terminate within reasonable time for the $\Enron$ dataset. Even though the datasets have many nodes, the output sizes are in fact very small (within 10). For instance, the smallest output sets of the $\Enron$, $\Collab 1$ and $\Collab 2$ contain just two nodes. In some sense, it means that to become in the center even in a large social network, it is often enough to establish only very few connections.

\begin{figure}[!htb]
        \centering
                \includegraphics[width=.8\textwidth]{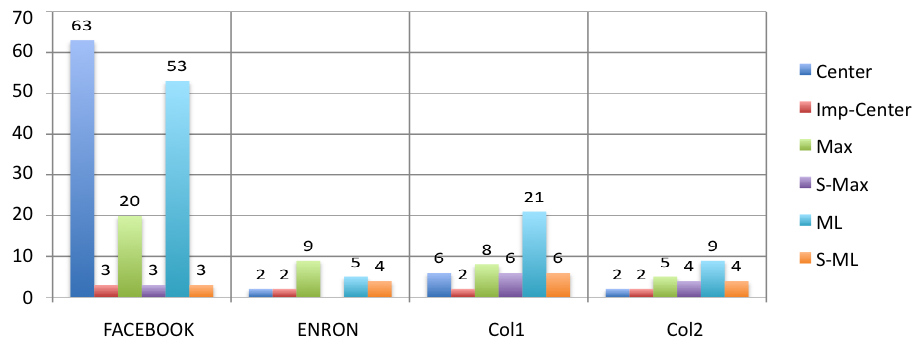}
        \caption{ The number of new ties  for the four real-world networks}\label{fig:datasets_res} 
\end{figure}

Among all algorithm $\Improved$-$\Center$ has the best performance, producing the smallest output set for all networks. Moreover, for $\Enron$, $\Collab 1$ and $\Collab 2$,  $\Improved$-$\Center$  returns the optimal broker set with cardinality $2$. A rather surprising fact is, despite straightforward seemingly-naive logic,  $\Center$ also produces small outputs in three networks. This reflects the fact that in order to become central it is often a good strategy to create ties with the friends of a central person. 
\section{How to Preserve or Improve the Diameter? Complexity and Algorithms for $\problemdiameter_\Delta$}\label{sec:diameter}
Let $G=(V,E)$ be a network and $u\notin V$. The $\problemdiameter_\Delta$ problem asks for a set $S\subseteq V$ such that the network $G\oplus_S u$  has diameter $\leq\Delta$; we refer to any such $S$ as {\em $\Delta$-enabling}.

\subsection{Preserving the diameter}
We first look at a special case when $\Delta=\diam(G)$, which has a natural motivation: How can an airline expand its existing route map with an additional destination while ensuring the maximum number of hops between any two destinations is not increased? We are interested in creating as few new connections as possible to reach this goal. Let $\delta(G)$ denote the size of the smallest $\diam(G)$-enabling set for $G$. We say a graph is {\em diametrically uniform} if all nodes have the same eccentricity. 


\begin{theorem}\label{thm:problemdiameter}
\begin{enumerate}
\item[(a)]If $G$ is not diametrically uniform,$\delta(G)\!=\!1$.
\item[(b)]If $G$ is complete, then $\delta(G)=|V|$.
\item[(c)]If $G$ is diametrically uniform and incomplete, then $1<\delta(G)\leq d$ where $d$ is the minimum degree of any node in $G$, and the upper bound $d$ is sharp.
\end{enumerate}
\end{theorem}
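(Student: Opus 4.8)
The plan is to first reduce the whole statement to an elementary fact about how adding $u$ changes distances. Observe that in $G\oplus_S u$ no distance between two old vertices can increase, since $u$ only creates new paths; in particular, in the case relevant here, $\Delta=\diam(G)$, every pair of old vertices is automatically within $\diam(G)$. Moreover, since the only neighbours of $u$ are the vertices of $S$, a shortest path leaving $u$ must first enter $S$, and a short argument shows $\dist(u,v)=1+\dist_G(v,S)$ for every $v\in V$, where $\dist_G(v,S)=\min_{s\in S}\dist_G(v,s)$ (and is $0$ for $v\in S$). Consequently $S$ is $\diam(G)$-enabling if and only if $\ecc(u)\le\diam(G)$, i.e.\ if and only if every node of $G$ lies within distance $\diam(G)-1$ of $S$. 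This turns $\delta(G)$ into the minimum size of such a ``$(\diam(G)-1)$-dominating'' set, in the spirit of the sub-radius dominating sets used earlier, and all three parts follow by reading off this characterization.

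For part~(a): if $G$ is not diametrically uniform there is a node $v_0$ with $\ecc_G(v_0)\le\diam(G)-1$, so $S=\{v_0\}$ already meets the condition; since $S$ must be nonempty, $\delta(G)=1$. For part~(b): if $G$ is complete then $\diam(G)=1$, so the condition degenerates to ``every node is within distance $0$ of $S$'', forcing $S=V$ and hence $\delta(G)=|V|$.

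For part~(c), write $r=\diam(G)=\rad(G)\ge 2$, the last inequality because a connected incomplete graph has diameter at least $2$. A single node $v$ has $\ecc_G(v)=r>r-1$, so $\{v\}$ is never $(r-1)$-dominating and $\delta(G)>1$. For the upper bound, pick a node $v$ of minimum degree $d$; for any $w\neq v$ a shortest $v$--$w$ path (of length at most $r$) leaves $v$ through some neighbour $v_i$, so $\dist_G(v_i,w)=\dist_G(v,w)-1\le r-1$; hence $N(v)$ is $(r-1)$-dominating and $\delta(G)\le d$. For sharpness one exhibits diametrically uniform incomplete graphs attaining equality; the cleanest instance is the cycle $C_n$ with $n\ge 4$, which is diametrically uniform and incomplete with $d=2$ and for which the $(r-1)$-ball around any node misses its antipode, so two nodes are needed and $\delta(C_n)=2=d$. (For larger $d$ one resorts to more elaborate examples, e.g.\ regular self-centered graphs such as the Petersen graph for $d=3$.)

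The routine parts are (a), (b), and both inequalities of (c); the step I expect to require the most care is the sharpness claim in~(c). The obstacle is that being diametrically uniform together with having a fixed diameter severely restricts how ``spread out'' a graph can be, so one cannot force a large $(r-1)$-dominating set by a naive packing argument (any two vertices are already within $r$ of each other); proving $\delta(G)=d$ instead requires an explicit construction, or explicit family, in which the neighbourhood of a minimum-degree vertex is genuinely irreplaceable.
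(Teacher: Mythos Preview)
Your proof is correct and follows essentially the same route as the paper: both reduce the question to whether $S$ is a $(\diam(G)-1)$-dominating set via the observation that old distances cannot increase and $\ecc_{G'}(u)=1+\max_{v}\dist_G(v,S)$, and then read off (a)--(c) directly from this characterization. You are in fact more thorough on one point: the paper asserts sharpness in (c) but provides no witness, whereas your cycle $C_n$ (and the Petersen graph for $d=3$) actually establish it.
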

\begin{proof}
For {\bf (a)}, suppose $G$ is not diametrically uniform. Take any $v$ where $\ecc(v)<\diam(G)$. Then in the expanded network $G\oplus_{\{v\}} u$, we have $\ecc(u)=\ecc(v)+1\leq \diam(G)$. {\bf (b)} is clear.
For {\bf (c)} Suppose $G$ is diametrically uniform and incomplete. For the lower bound, suppose $\gamma_{\diam(G)-1}(G)=1$. Then there is some $v\in V$ with the following property: In the network $G\oplus_{\{v\}} u$ we have $\ecc(u)\leq \diam(G)$, which means that $\ecc(v)<\diam(G)$. This contradicts the fact that $G$ is  diametrically uniform. For the upper bound, take a node $v\in V$ with the minimum degree $d$. Let $N$ be the set of nodes adjacent to $v$. From any node $w\neq v$, there is a shortest path of length $\leq \diam(G)$ to $v$. This path contains a node in $N$. Hence $w$ is at distance $\leq \diam(G)-1$ from some node in $N$. Furthermore as $G$ is not complete, $\diam(G)\geq 2$ and $v$ is at distance $1\leq \diam(G)-1$ from nodes in $N$. \qed
\end{proof}
\paragraph*{Remark} We point out that in case (c) calculating the exact value of $\delta(G)$ is a hard: In \cite{hardness_diameter}, its parametrized complexity  is shown to be complete for $\mathsf{W}[2]$, second level of the $\mathsf{W}$-hierarchy. Hence $\problemdiameter_\Delta$ is unlikely  to be in $\mathsf{P}$. On the other hand, we argue that real-life networks are rarely diametrically uniform. Hence by Thm.~\ref{thm:problemdiameter}(a), the smallest number of new connections needed to preserve the diameter is 1.

\subsection{Reducing the diameter}
We now explore the question $\problemdiameter_\Delta$ where $2\leq \Delta<\diam(G)$; this refers to the goal of placing a new member in the network and creating ties to allow a closer distance between all pairs of members.   We suggest two heuristics to solve this problem. 

\paragraph*{\bf Algorithm 9 $\Periphery$.} The {\em periphery} $P(G)$ of $G$ consists of all nodes $v$ with $\ecc(v)=\diam(G)$. Suppose $\diam(G)>2$. Then the combined network $G\oplus_{P(G)} u$ has diameter smaller than $\diam(G)$. Hence we apply the following heuristic:  Two nodes $v,w$ in $G$ are said to form a {\em peripheral pair} if $\dist(v,w)=\diam(G)$. The algorithm first adds the new node $u$ to $G$ and repeats the following procedure until the current graph has diameter $\leq \Delta$:\\
1) Randomly pick a peripheral pair $v,w$ in the current graph\\
2) Adds the edges $uv,uw$ if they have not been added already\\
3) Compute the diameter of the updated graph

\noindent Note that once $v,w$ are chosen as a peripheral pair and the corresponding edges $uv,uw$ added, $v$ and $w$ will have distance 2 and they will not be chosen as a peripheral pair again. Hence the algorithm eventually terminates and produces a graph with diameter at most $\Delta$.

\paragraph*{\bf Algorithm 10 $\CenterPeri$ (Center-Periphery).} This algorithm applies a similar heuristic as $\Periphery$, but instead of picking peripheral pairs at each iteration, it first picks a node $v$ in the center and adds the edge $uv$; it then repeats the following procedure until the current graph has diameter $\leq \Delta$:\\
1) Randomly pick a node $w$ in the periphery of the current graph\\
2) Add the edge $uw$ if it has not been added already\\
3) Compute the diameter of the updated graph

\noindent Suppose at one iteration the algorithm picks $w$ in the periphery. Then after this iteration the eccentricity of $w$ is at most $r+2$ where $r$ is the radius of the graph.

\subsection{Experiments for $\problemdiameter_\Delta$}
We implement and test the performance of Algorithms 9,10 for  the problem $\problemdiameter_\Delta$.
The performance of these algorithms are measured by the number of new ties created.


\paragraph*{\bf Experiment 4: Random graphs.}
We apply the two models of random graphs, BA and NWS, as described above. We generated $350$ graphs and considered the case when $\Delta = d(G) - 1$, i.e. the aim was to improve the diameter by one.
For both types of random graphs (fixing size and radius), the average number of new ties are shown in Fig.~\ref{fig:Barabasi_improve}.
The experiments show that $\Periphery$ performs better when the radius of the graph is close to the diameter (when radius is $>2/3$ of diameter), whilst $\CenterPeri$ is slightly better when the radius is significantly smaller than the diameter.

\begin{figure}[!htb]
        \centering
                \includegraphics[width=\textwidth]{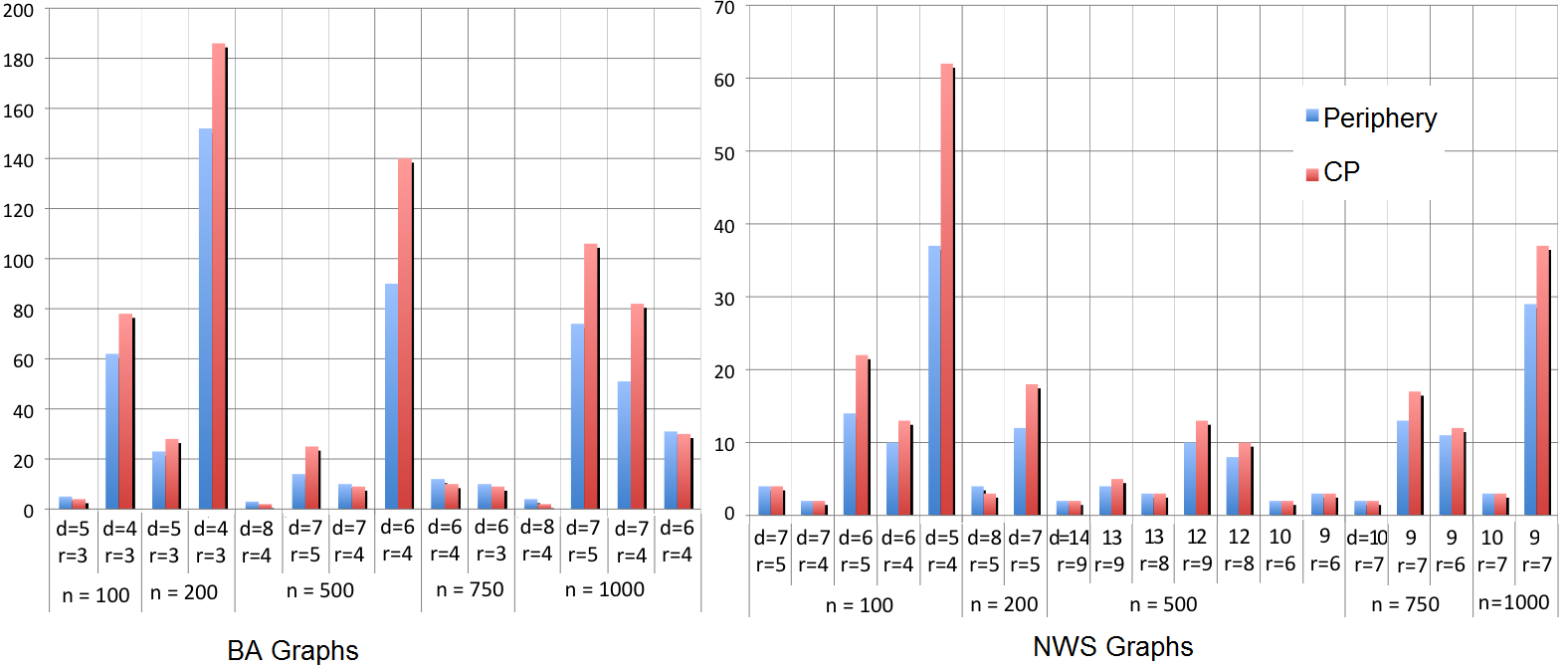}
        \caption{ Comparing two methods for improving diameter applied to BA (left) and NWS (right) graphs}\label{fig:Barabasi_improve}
\end{figure}

\paragraph*{\bf Experiment 5: Real-World Datasets.} We run both $\Periphery$ and $\CenterPeri$ on the networks $\Collab 1$ and $\Collab 2$  introduced above, setting $\Delta =\diam(G)-i$ for $1\leq i\leq 4$. The numbers of new edges obtained by $\Periphery$ and $\CenterPeri$ are shown in Figure~\ref{fig:collaboration_improve}; naturally for increasing $i$, more ties need to be created. 
We point out that, despite the large total number of nodes, one needs less than $19$ new edges to improve the diameter even by four.
This reveals an interesting phenomenon: While a collaboration network may be large, a few more collaborations are sufficient to reduce the diameter of the network.

On the $\Facebook$ dataset, $\Periphery$ is significantly better than $\CenterPeri$: To reduce the diameter of this network from $8$ to $7$, $\Periphery$ requires 2 edges while $\CenterPeri$ requires $47$. When one wants to reach the diameter $6$, the numbers of new edges increase to 6 for $\Periphery$ and 208 for $\CenterPeri$.

\begin{figure}[!htb]
        \centering
                \includegraphics[width=0.8\textwidth]{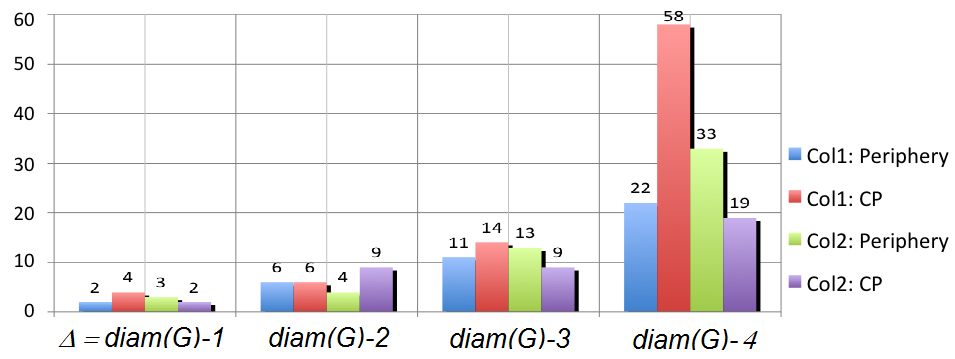}
        \caption{ Applying algorithms for improving diameter to Collaboration 1 and Collaboration 2 datasets}\label{fig:collaboration_improve}
\end{figure}

\section{Conclusion and Outlook}
This work studies how ties are built between a newcomer and an established network to reach certain structural properties. Despite achieving optimality is often computationally hard, there are efficient heuristics that reach the desired goals using few new edges. We also observe that the number of new links required to achieve the specified properties remain small even for large networks.

This work amounts to an effort towards an algorithmic study of network building. Along this effort, natural questions have yet to be explored include: (1) Investigating the creation of ties between two arbitrary networks, namely, how ties are created between two established networks to maintain or reduce diameter. (2) When building networks in an organizational context (such as merging two departments in a company), one normally needs not only to take into account the informal social relations, but also formal ties such as the reporting relations, which are typically directed edges \cite{LiuMoskvina}. We plan to investigate network building in an organizational management perspective by incorporating both types of ties.

\bibliographystyle{splncs03}

\bibliography{ijcai16}

\end{document}